\newtheorem{assumption}{Assumption}
\newtheorem{lemma}{Lemma}
\def\BibTeX{{\rm B\kern-.05em{\sc i\kern-.025em b}\kern-.08em
    T\kern-.1667em\lower.7ex\hbox{E}\kern-.125emX}}
\begin{document}

\title{Bayesian Optimization for Non-Cooperative Game-Based Radio Resource Management
}

\author{\IEEEauthorblockN{Yunchuan Zhang\IEEEauthorrefmark{1},
Jiechen Chen\IEEEauthorrefmark{2},
Junshuo Liu\IEEEauthorrefmark{3} and
Robert C. Qiu\IEEEauthorrefmark{3}}
\IEEEauthorblockA{\IEEEauthorrefmark{1}School of Information Engineering,
Wuhan University of Technology,
Wuhan, China}
\IEEEauthorblockA{\IEEEauthorrefmark{2}Department of Engineering,
King's College London,
London, UK}
\IEEEauthorblockA{\IEEEauthorrefmark{3}School of Electronic Information and Communications, Huazhong University of Science and Technology, Wuhan, China\\
Email: \{yunchuan.zhang, jiechen.chen\}@kcl.ac.uk, \{junshuo\_liu, caiming\}@hust.edu.cn} 
}

\maketitle

\begin{abstract}
Radio resource management in modern cellular networks often calls for the optimization of complex utility functions that are potentially conflicting between different base stations (BSs). Coordinating the resource allocation strategies efficiently across BSs to ensure stable network service poses significant challenges, especially when each utility is accessible only via costly, black-box evaluations. This paper considers formulating the resource allocation among spectrum sharing BSs as a non-cooperative game, with the goal of aligning their allocation incentives toward a stable outcome. To address this challenge, we propose PPR-UCB, a novel Bayesian optimization (BO) strategy that learns from sequential decision-evaluation pairs to approximate pure Nash equilibrium (PNE) solutions. PPR-UCB applies martingale techniques to Gaussian process (GP) surrogates and constructs high probability confidence bounds for utilities uncertainty quantification. Experiments on downlink transmission
power allocation in a multi-cell multi-antenna system demonstrate the efficiency of PPR-UCB in identifying effective equilibrium solutions within a few data samples.
\end{abstract}

\begin{IEEEkeywords}
Bayesian optimization, radio resource management, Nash equilibrium, uncertainty quantification
\end{IEEEkeywords}

\section{Introduction}
Radio resource management of modern cellular communication system requires the optimization of complex utilities which may involve potential \emph{conflicts} between different segments of the network. To ensure stable network operations during optimizations, one promising way is via game-theoretic methods \cite{chen2018resource}. For instance, heterogeneous spectrum providers configure the exclusion zones in dynamic spectrum access (DSA) networks via analyzing \emph{Nash equilibrium} (NE) in a non-cooperative game \cite{salama2020privacy}, but conventional mathematical programming methods for NE often require analytical expressions of the utilities which may not be available in practice. When the utilities are in the form of \emph{black-box} functions, the evaluation of a NE must rely on querying the utilities of each involved entity.

For concreteness, this paper investigates the downlink transmission power control game with \emph{unknown} utilities in a multi-cell multi-antenna communication system \cite{zhang2021non}. As illustrated in Fig. \ref{fig: intro flow}, a central optimizer assigns a set of transmission power actions to different \emph{players}, representing spectrum sharing BSs managed by mobile network operators (MNOs). The goal is to design an optimization policy that ensures no players have incentives to deviate from the allocated power actions. Thus, the suggested action set aligns closely with the individual rationality of the players. This equilibrium solution ensures a stable operation of the network as the players strictly adhere to the allocated actions of the central optimizer.

To tackle the non-cooperative games with black-box utilities, prior works applied reinforcement learning (RL) for identifying NE solutions \cite{bai2020near,li2025deep}, and provided practical use cases in communications systems \cite{shi2021make,nagib2023safe}. In particular, reference \cite{naparstek2018deep} introduced a fully distributed deep RL architecture for DSA that can be flexibly adapted to general complex real-world settings. However, RL algorithms typically rely on large amounts of utilities observations which may incur undesirable evaluation costs on computational overhead or communications latency.

\emph{Bayesian optimization} (BO) is a common sample-efficient framework for \emph{costly-to-evaluate} black-box optimization problems \cite{frazier2018tutorial}, and its variants have been widely applied to applications in wireless systems \cite{zhang2023bayesian,maggi2021bayesian}. The first attempt to seek for NE via BO can be found in \cite{picheny2019bayesian}; while reference \cite{tay2023no} provided a formal NE regret guarantee for using BO with confidence sets. However, these studies require making strong assumptions on the function space knowledge regarding the utilities and evaluating the NP-hard maximal information gain metrics which are impractical in real-world scenarios.

In this paper, we introduce a novel BO policy, namely \emph{prior-posterior ratio upper confidence bounds} (PPR-UCB) that aims at approximating a \emph{pure Nash equilibrium} (PNE) solution in a non-cooperative transmission power control game. The main contributions are as follows.
\begin{itemize}
    \item We introduce PPR-UCB, a novel BO policy tailored for evaluation of approximate PNEs in general-sum games with black-box utility functions. PPR-UCB adopts martingale techniques to efficiently construct confidence sets for the probabilistic utilities and deviation incentives of each involved players.
    \item We provide a coverage analysis of the \emph{any-time valid} confidence sets constructed by the prior-posterior ratio martingales. The theoretical results demonstrates the reliability of the proposed PPR-UCB policy.
    \item We validate the performance of the proposed PPR-UCB in a multi-cell cellular network. Empirical results for the non-cooperative power control game provide insights into the potential benefits of using BO for conflicts management in communications systems.
\end{itemize}

\section{Problem Formulation}\label{sec: pf}
To exemplify the application of the proposed sequential black-box optimization policy, we consider the non-cooperative game-based downlink transmission power control problem studied in \cite{zhang2021non}. The targeting cellular communication network consists of $N$ cells with one BS, each belonging to a private MNO that provides service for $M$ user equipments (UEs). The MNOs in the network operates on a shared frequency band.

\begin{figure}[t]

%\begin{minipage}[b]{1.0\linewidth}
  \centering
  \centerline{\includegraphics[scale=0.43]{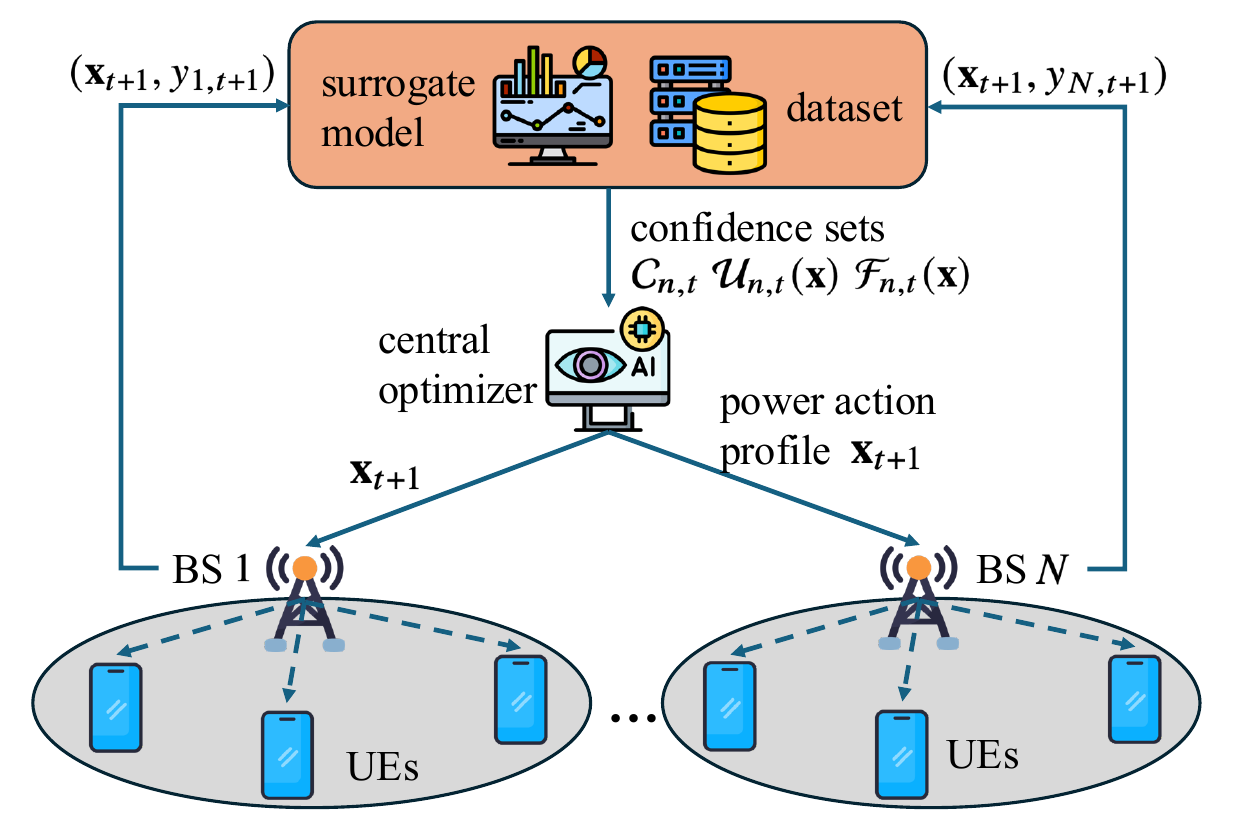}}
  %\vspace{-0.2cm}
  \caption{This paper studies a setting in which a central optimizer uses BO to approximate the pure Nash equilibrium (PNE) for a non-cooperative downlink transmission power control game with costly-to-evaluate black-box utility functions of $N$ BSs. At any time $t+1$, the central optimizer assigns an action profile $\mathbf{x}_{t+1}$ to all BSs. As a result, the optimizer receives noisy utility feedback $y_{n,t}$ about the corresponding utility value $u_n(\mathbf{x}_{t+1})$ for all BSs $n\in\mathcal{N}$. The goal is to approach a solution in the $\epsilon$-PNE set \eqref{eq: epsilon pne}, where $\epsilon\geq 0$ represents the dissatisfaction tolerance.}
  \label{fig: intro flow}
\vspace{-0.4cm}
\end{figure}

As shown in Fig. \ref{fig: intro flow}, the central coordinator is tasked with selection of an action profile $\mathbf{x}=[\mathbf{x}_1,...,\mathbf{x}_N]\in\mathcal{X}$ for $N$ BSs, where $\mathbf{x}_n=[x_{n,1},...,x_{n,M}]\subset\mathds{R}^{M}$ represents the selected power action for BS $n\in\mathcal{N}=[1,...,N]$ with each entry $x_{n,m}$ describing the downlink transmission power to its serving UE $m\in\mathcal{M}=[1,...,M]$. The propagation channel matrix between BS $n$ and any UE $m$ associated with BS $n'$ is modeled as
\begin{align}
    \mathbf{H}_{n,n'(m)}=10^{\frac{-\mathrm{PL}(d_{n,n'(m)})}{20}}\beta_{n,n'(m)}\mathbf{G}_{n,n'(m)}, \label{eq: channel matrix}
\end{align}
where the pathloss $\mathrm{PL}(d_{n,n'(m)})$ in dB is a function of the distance from BS $n$ to UE $m$ served by BS $n'$; the slow fading factor $\beta_{n,n'(m)}$ depends on whether the corresponding transmission is in line-of-sight (LOS) or non-line-of-sight (NLOS) as specified in 3GPP TR 38.901 \cite{3gpp38901}; and the fast fading $\mathbf{G}_{n,n'(m)}$ in the form of $N_R\times N_T$ matrix has i.i.d. complex Gaussian entries with $N_T$ being the number of transmit antennas at each BS and $N_R$ being the number of receiving antennas at each UE. 

Each MNO is \emph{self-interested}, and attempts to maximize the utility of its own BS. Specifically, the utility function for BS $n\in\mathcal{N}$ at the assigned action profile $\mathbf{x}$ is described by the discounted sum spectral efficiency across the associated UEs, which is defined as
\begin{align}
    u_n(\mathbf{x})=&-\lambda\bigg(\sum_{m=1}^Mx_{n,m}\bigg)+\Bigg[\sum_{m=1}^M\log\big|\mathbf{I}\nonumber\\&+x_{n,m}\boldsymbol{\Gamma}_{n,n(m)}^{-1}(\mathbf{x})\mathbf{H}_{n,n(m)}\mathbf{H}_{n,n(m)}^{\sf H}\big|\Bigg],\label{eq: utility}
\end{align}
where $\lambda\geq0$ is a non-negative discount factor; $\mathbf{I}$ is the $N_R\times N_R$ identity matrix; and $\boldsymbol{\Gamma}_{n,n(m)}^{-1}(\mathbf{x})$ represents the $N_R\times N_R$ interference-plus-noise covariance matrix for the transmission from BS $n$ to its serving UE $m$, i.e,
\begin{align}
    \boldsymbol{\Gamma}_{n,n(m)}^{-1}(\mathbf{x})=&10^{\frac{\sigma_h^2}{10}}\mathbf{I}+\sum_{m'=1,m'\neq m}^{M}x_{n,m'}\mathbf{H}_{n,n(m)}\mathbf{H}^{\sf H}_{n,n(m)}\nonumber\\&+\sum^N_{n'=1,n'\neq n}\sum_{m=1}^Mx_{n',m}\mathbf{H}_{n',n'(m)}\mathbf{H}^{\sf H}_{n',n'(m)}
\end{align}
with $\sigma_h^2$ being the channel noise power in dB scale.

Upon receiving the action profile $\mathbf{x}$ from the central coordinator, each BS $n\in\mathcal{N}$ quantifies its \emph{dissatisfaction} on the assigned action $\mathbf{x}$ via the non-negative regret
\begin{align}
    f_n(\mathbf{x})=\max\limits_{\mathbf{x}_n'}u_n(\mathbf{x}_n',\mathbf{x}_{-n})-u_n(\mathbf{x})\geq 0,\label{eq: player n regret}
\end{align}
where $\mathbf{x}_{-n}$ represents the actions taken by all BSs except for BS $n$. The dissatisfaction \eqref{eq: player n regret} indicates that any BS $n$ may unilaterally deviate from the assigned action $\mathbf{x}_n$ if doing so can improve its utility. These deviation incentives across all BSs degrade the \emph{stability} of the network operations. To this end, the goal of the central optimizer is to seek for a set of $\epsilon$-PNE solutions $\mathbf{x}^{(\epsilon)}$ defined as
\begin{align}
    \mathcal{X}^{(\epsilon)}:=\{\mathbf{x}^{(\epsilon)}\in\mathcal{X}|f_n(\mathbf{x}^{(\epsilon)})\leq\epsilon\,\, \text{for}\,\, n\in \mathcal{N}\},\label{eq: epsilon pne}
\end{align}
for some tolerance parameter $\epsilon\geq 0$.

Evaluating the utility functions \eqref{eq: utility} and the regret \eqref{eq: player n regret} for all feasible action profiles to find the $\epsilon$-PNE set \eqref{eq: epsilon pne} is intractable as the scale of the network increases. Therefore, we adopt an online learning-based optimizer that sequentially selects an action profile $\mathbf{x}_t$ at time $t$, and receives the noisy observations from each BS $n\in\mathcal{N}$, i.e.,
\begin{align}
    y_{n,t}=u_n(\mathbf{x}_t)+z_{n,t},\label{eq: nosiy observation}
\end{align}
where the observation noise variables $z_{n,t}\sim\mathcal{N}(0,\sigma^2)$ are independent across all BSs. The decision-making process relies on the collections of previous observations
\begin{align}
    \mathcal{D}_{n,t-1}=\{(\mathbf{x}_1,y_{n,1}),...,(\mathbf{x}_{t-1},y_{n,t-1})\}\label{eq: dataset}
\end{align}
from all BSs.

\section{BO for Non-cooperative Games}\label{sec: bo part}
In this section, we introduce \emph{prior-posterior ratio upper confidence bounds} (PPR-UCB) by first presenting the confidence sets construction for utilities and regret, and then detailing the action profiles acquisition process. We start with a brief review of the surrogate model adopted in BO policy.

\subsection{Gaussian Process}\label{ssec: gp}
As anticipated, BO does not require a precise expression of the utility function for calculating the $\epsilon$-PNE set \eqref{eq: epsilon pne}, but works on black-box evaluations of the utility that is modeled by a \emph{probabilistic surrogate model}. Gaussian process (GP) is the typical model of choice as it provides analytical posterior inference with small datasets.

Specifically, for each BS $n$, a separate GP is deployed at the central optimizer to model the individual utility function $u_n(\mathbf{x})$ by assuming that, for any collection $\mathbf{X}_{t}=[\mathbf{x}_{1},...,\mathbf{x}_{t}]^{\sf T}$ of action profiles, the corresponding utility observations $\mathbf{y}_{n,t}=[y_{n,1},...,y_{n,t}]^{\sf T}$ follow a multivariate Gaussian distribution $\mathcal{N}(\mathbf{0},\mathbf{K}(\mathbf{X}_{t}))$, with $t\times 1$ zero mean vector $\mathbf{0}$, and $t\times t$ covariance matrix $\mathbf{K}(\mathbf{X}_{t})$ given by
\begin{align}
    \mathbf{K}(\mathbf{X}_{t})=\begin{bmatrix}
k(\mathbf{x}_{1},\mathbf{x}_{1}) & ... & k(\mathbf{x}_{1},\mathbf{x}_{t})\\
\vdots & \ddots & \vdots \\ k(\mathbf{x}_{t-1},\mathbf{x}_{1}) & ... & k(\mathbf{x}_{t},\mathbf{x}_{t})
\end{bmatrix}, \label{eq: covar mat}
\end{align}
where each entry is obtained via a positive semidefinite kernel function $k(\mathbf{x},\mathbf{x}')$. Intuitively, the role
of the kernel function is to measure the similarity between inputs $\mathbf{x}$ and $\mathbf{x}'$ in terms of the respective utility values. The kernel function may be chosen, for instance, as the radial basis function (RBF) kernel
\begin{align}
    k(\mathbf{x},\mathbf{x}')=\exp\bigg(-\frac{||\mathbf{x}-\mathbf{x}'||^2}{2l^2}\bigg),\label{eq: rbf kernel}
\end{align}
where the lengthscale parameter $l>0$ controls the smoothness of the kernel function.

Given the collected dataset $\mathcal{D}_{n,t}$ in \eqref{eq: dataset}, the posterior distribution of utility value $u_n(\mathbf{x})$ at any action profile $\mathbf{x}$ is a Gaussian distribution
\begin{align}
    p(u_n(\mathbf{x})|\mathcal{D}_{n,t})=\mathcal{N}(\mu_{n,t}(\mathbf{x}),\sigma^2_{t}(\mathbf{x})),\label{eq: gp posterior}
\end{align}
where 
\begin{subequations}
    \begin{equation}
        \mu_{n,t}(\mathbf{x})=\mathbf{k}_{t}(\mathbf{x})^{\sf T}(\mathbf{K}(\mathbf{X}_{t})+\sigma^2\mathbf{I})^{-1}\mathbf{y}_{n,t},\label{eq: gp posterior mean}
    \end{equation}
    \begin{equation}
        \sigma^2_{t}(\mathbf{x})=k(\mathbf{x},\mathbf{x})-\mathbf{k}_{t}(\mathbf{x})^{\sf T}(\mathbf{K}(\mathbf{X}_{t})+\sigma^2\mathbf{I})^{-1}\mathbf{k}_{t}(\mathbf{x}),\label{eq: gp posterior variance}
    \end{equation}
\end{subequations}
with the $t\times 1$ cross-variance vector $\mathbf{k}_{t}(\mathbf{x})=[k(\mathbf{x},\mathbf{x}_1),...,k(\mathbf{x},\mathbf{x}_{t})]^{\sf T}$. Note that, the GP posterior variance $\sigma^2_{t}(\mathbf{x})$ in \eqref{eq: gp posterior variance} is the same across all players since it only depends on the previously selected action profiles $\mathbf{X}_{t}$ and the current action profile $\mathbf{x}$ to be measured.

\subsection{Prior-Posterior Ratio Confidence Sets}\label{ssec: cs}
To quantify the residual uncertainty on the inference of the utility function $u_n(\mathbf{x})$ at any candidate action profile $\mathbf{x}$, PPR-UCB constructs the confidence sets for the model parameters with probabilistic coverage guarantee. To proceed, we make the following regularity assumption on the utility functions of each BS $n\in\mathcal{N}$.
\begin{assumption}[Probabilistic Utility Function]\label{assumption: bayesian linear regression}
    The utility function $u_n(\mathbf{x})$ at each BS $n$ can be decomposed as
    \begin{align}
        u_n(\mathbf{x})=\psi_n(\mathbf{x})^{\sf T}\boldsymbol{\theta}^*_n,\label{eq: linear function}
    \end{align}
    where $\psi_n(\mathbf{x})$ is a $D\times 1$ feature vector; and the unknown parameters vector $\boldsymbol{\theta}^*_n\in\mathds{R}^{D\times 1}$ is drawn from an isotropic Gaussian prior distribution $p(\boldsymbol{\theta})$, i.e., $\boldsymbol{\theta}^*_n\sim\mathcal{N}(\mathbf{0},\mathbf{I}_{D})$. The vectors $\{\boldsymbol{\theta}_n^*\}_{n\in\mathcal{N}}$ are mutually independent.
\end{assumption}

Assumption \ref{assumption: bayesian linear regression} can serve as an arbitrarily accurate approximation of the typical assumption that the utility functions are drawn in an i.i.d. manner from the GP model in \eqref{eq: gp posterior} with RBF kernel \cite{dai2020federated}. In fact, the RBF kernel can be approximated arbitrarily well by the inner product
\begin{align}
    k(\mathbf{x},\mathbf{x}')\approx\psi_n(\mathbf{x})^{\sf T}\psi_n(\mathbf{x}')\label{eq: rff decomposition}
\end{align}
with a sufficiently high-dimensional feature vector $\psi_n(\mathbf{x})\in\mathds{R}^{D\times 1}$. In practice, consider the random Fourier features (RFFs)
\begin{align}
    \psi_n(\mathbf{x})=[\sqrt{2/D}\cos{(\mathbf{s}_{n,i}^{\sf T}\mathbf{x}+b_{n,i})}]_{i=1}^{D},\label{eq: Bayesian rff expression}
\end{align}
with $d\times 1$ vectors $\mathbf{s}_{n,i}$ drawn i.i.d. from $\mathcal{N}(0,2l\mathbf{I}_d)$ and scalars $b_{n,i}$ sampled i.i.d. from the uniform distribution over the interval $[0,2\pi]$. With an increasing dimension $D\rightarrow\infty$ of the RFFs, the approximation error in \eqref{eq: rff decomposition} vanishes with high probability \cite{dai2020federated}.

At each time $t$, PPR-UCB evaluates the posterior distribution of parameters vector $\boldsymbol{\theta}_n$ for BS $n\in\mathcal{N}$ under Assumption \ref{assumption: bayesian linear regression} as follows
\begin{align}
    p(\boldsymbol{\theta}_n|\mathcal{D}_{n,t})\propto p(\boldsymbol{\theta}_n)p(\mathcal{D}_{n,t}|\boldsymbol{\theta}_n)=\mathcal{N}(\boldsymbol{\mu}_{n,t},\sigma^2\boldsymbol{\Sigma}_{n,t}^{-1}),\label{eq: posterior over theta}
\end{align}
where
\begin{subequations}
    \begin{equation}
        \boldsymbol{\mu}_{n,t}=\boldsymbol{\Sigma}_{n,t}^{-1}\boldsymbol{\Psi}_{n,t}^{\sf T}\mathbf{y}_{n,t},\label{eq: theta posterior mean}
    \end{equation}
    \begin{equation}
        \boldsymbol{\Sigma}_{n,t}=\boldsymbol{\Psi}_{n,t}^{\sf T}\boldsymbol{\Psi}_{n,t}+\sigma^2\mathbf{I}_{D}\label{eq: theta posterior variance}
    \end{equation}
\end{subequations}
with $\boldsymbol{\Psi}_{n,t}=[\psi_n(\mathbf{x}_1),...,\psi_n(\mathbf{x}_{t})]^{\sf T}$ being the $t\times D$ matrix of the feature vectors. Therefore, the mean and variance of the GP posterior distribution \eqref{eq: gp posterior} for the utility value $u_n(\mathbf{x})$ can be obtained as
\begin{subequations}
    \begin{equation}
        \mu_{n,t}(\mathbf{x})=\psi_n(\mathbf{x})^{\sf T}\boldsymbol{\Sigma}_{n,t}^{-1}\boldsymbol{\Psi}_{n,t}^{\sf T}\mathbf{y}_{n,t},\label{eq: utility gp posterior mean}
    \end{equation}
    \begin{align}
        \sigma^2_t(\mathbf{x})=&\psi_n(\mathbf{x})^{\sf T}\psi_n(\mathbf{x})\nonumber\\&-\psi_n(\mathbf{x})^{\sf T}\boldsymbol{\Psi}_{n,t}^{\sf T}(\boldsymbol{\Psi}_{n,t}^{\sf T}\boldsymbol{\Psi}_{n,t}+\sigma^2\mathbf{I}_{D})^{-1}\boldsymbol{\Psi}_{n,t}\psi_n(\mathbf{x}).\label{eq: utility gp posterior variance}
    \end{align}
\end{subequations}

Using the posterior \eqref{eq: posterior over theta} of the parameters vector $\boldsymbol{\theta}_n$ for BS $n$ at each time $t$, PPR-UCB evaluates the prior-posterior ratio
\begin{align}
    \ell_{n,t}(\boldsymbol{\theta}_n)&=\frac{p(\boldsymbol{\theta}_n)}{p(\boldsymbol{\theta}_n|\mathcal{D}_{n,t})}\nonumber\\&=\frac{\sigma^D}{\sqrt{\text{det}(\boldsymbol{\Sigma}_{n,t})}}\exp\bigg(-\frac{1}{2}||\boldsymbol{\theta}_n||^2\nonumber\\&\hspace{0.5cm}+\frac{1}{2\sigma^2}(\boldsymbol{\theta}_n-\boldsymbol{\mu}_{n,t})^{\sf T}\boldsymbol{\Sigma}_{n,t}(\boldsymbol{\theta}_n-\boldsymbol{\mu}_{n,t})\bigg),\label{eq: pp ratio}
\end{align}
where $\text{det}(\boldsymbol{\Sigma}_{n,t})$ is the determinant of covariance matrix $\boldsymbol{\Sigma}_{n,t}$. Smaller values of the prior-posterior ratio indicates that the parameters vector $\boldsymbol{\theta}_n$ has a large posterior probability density, suggesting that $\boldsymbol{\theta}_n$ aligns better with the evidence provided by the collected data under Assumption \ref{assumption: bayesian linear regression}.

At any time $t\geq 1$, using the prior-posterior ratio in \eqref{eq: pp ratio}, PPR-UCB builds confidence sets for parameters vector $\boldsymbol{\theta}_n$ as
\begin{align}
    \mathcal{C}_{n,t}&=\{\boldsymbol{\theta}_n|\ell_{n,t}(\boldsymbol{\theta}_n)\leq1/\delta\}\nonumber\\&=\big\{\boldsymbol{\theta}_n\big|(\boldsymbol{\theta}_n-\boldsymbol{\mu}_{n,t})^{\sf T}\boldsymbol{\Sigma}_{n,t}(\boldsymbol{\theta}_n-\boldsymbol{\mu}_{n,t})-||\boldsymbol{\theta}_n||^2\nonumber\\&\hspace{1.2cm}\leq2[\ln(\text{det}(\boldsymbol{\Sigma}_{n,t}))-\ln(\sigma^D\delta)]\big\}\label{eq: Bayesian confidence sequence}
\end{align}
for all players $n$, where $\delta\in(0,1]$ is a hyperparameter. The confidence set \eqref{eq: Bayesian confidence sequence} indicates that with high probability, i.e., $\delta$ is sufficiently small, the mean estimate $\boldsymbol{\mu}_{n,t}$ is included.

\begin{lemma}[Utility Parameters Coverage Guarantee of PPR-UCB]\label{lemma: Bayesian confidence}
    Under Assumption \ref{assumption: bayesian linear regression}, the confidence set \eqref{eq: Bayesian confidence sequence} is anytime valid at level $1-\delta$ in the sense that it includes the ground truth parameters $\boldsymbol{\theta}_n^*$ with probability no smaller than $1-\delta$ for all time $t\geq 1$:
    \begin{align}
        \Pr(\boldsymbol{\theta}_n^*\in\mathcal{C}_{n,t},\,\,\text{for all $t\geq1$})\geq 1-\delta.\label{eq: Bayesian theta coverage}
    \end{align}
    In \eqref{eq: Bayesian theta coverage}, the probability is evaluated with respect to the ground truth distribution
    \begin{align}
        p(\mathcal{D}_{n,t}|\boldsymbol{\theta}_n^*)=\prod_{t'=1}^t p(y_{n,t'}|\mathbf{x}_{t'},\boldsymbol{\theta}_n^*).\label{eq: Bayesian lemma ground truth}
    \end{align}
\end{lemma}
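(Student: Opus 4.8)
The plan is to recognize the prior-posterior ratio \eqref{eq: pp ratio}, evaluated at the ground truth $\boldsymbol{\theta}_n^*$, as a nonnegative martingale and then invoke Ville's maximal inequality for nonnegative supermartingales. The key preliminary step is to rewrite $\ell_{n,t}(\boldsymbol{\theta}_n^*)$ via Bayes' rule $p(\boldsymbol{\theta}_n\mid\mathcal{D}_{n,t})=p(\boldsymbol{\theta}_n)p(\mathcal{D}_{n,t}\mid\boldsymbol{\theta}_n)/p(\mathcal{D}_{n,t})$ inside \eqref{eq: pp ratio}, which yields
\begin{align}
    \ell_{n,t}(\boldsymbol{\theta}_n^*)=\frac{p(\boldsymbol{\theta}_n^*)}{p(\boldsymbol{\theta}_n^*\mid\mathcal{D}_{n,t})}=\frac{p(\mathcal{D}_{n,t})}{p(\mathcal{D}_{n,t}\mid\boldsymbol{\theta}_n^*)},\nonumber
\end{align}
i.e.\ it equals the ratio of the Bayesian marginal likelihood $p(\mathcal{D}_{n,t})=\int p(\mathcal{D}_{n,t}\mid\boldsymbol{\theta})p(\boldsymbol{\theta})\,d\boldsymbol{\theta}$ to the true data likelihood \eqref{eq: Bayesian lemma ground truth}. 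Set $M_{n,t}:=\ell_{n,t}(\boldsymbol{\theta}_n^*)$ with $M_{n,0}=1$; by the definition of the confidence set \eqref{eq: Bayesian confidence sequence} we have the equivalence $\{\exists\,t\geq1:\boldsymbol{\theta}_n^*\notin\mathcal{C}_{n,t}\}=\{\sup_{t\geq1}M_{n,t}>1/\delta\}$, so it suffices to control $\sup_t M_{n,t}$.

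Next I would establish the martingale property of $\{M_{n,t}\}_{t\geq0}$ with respect to the filtration $\mathcal{F}_t$ generated by $\mathcal{D}_{n,t}$ and the (adaptively chosen) actions, conditionally on $\boldsymbol{\theta}_n^*$ and under the ground-truth law \eqref{eq: Bayesian lemma ground truth}. Writing the marginal likelihood recursively as $p(\mathcal{D}_{n,t})=p(\mathcal{D}_{n,t-1})\,p(y_{n,t}\mid\mathbf{x}_t,\mathcal{D}_{n,t-1})$ with posterior predictive $p(y_{n,t}\mid\mathbf{x}_t,\mathcal{D}_{n,t-1})=\int p(y_{n,t}\mid\mathbf{x}_t,\boldsymbol{\theta})p(\boldsymbol{\theta}\mid\mathcal{D}_{n,t-1})\,d\boldsymbol{\theta}$, and using the factorization of \eqref{eq: Bayesian lemma ground truth} for the denominator, the increment telescopes to
\begin{align}
    \frac{M_{n,t}}{M_{n,t-1}}=\frac{p(y_{n,t}\mid\mathbf{x}_t,\mathcal{D}_{n,t-1})}{p(y_{n,t}\mid\mathbf{x}_t,\boldsymbol{\theta}_n^*)}.\nonumber
\end{align}
Since $\mathbf{x}_t$ is $\mathcal{F}_{t-1}$-measurable and, under \eqref{eq: nosiy observation}, $y_{n,t}\mid\mathcal{F}_{t-1}$ has density $p(\cdot\mid\mathbf{x}_t,\boldsymbol{\theta}_n^*)$, integrating the ratio against this density gives $\mathbb{E}[M_{n,t}/M_{n,t-1}\mid\mathcal{F}_{t-1}]=\int p(y\mid\mathbf{x}_t,\mathcal{D}_{n,t-1})\,dy=1$, whence $\mathbb{E}[M_{n,t}\mid\mathcal{F}_{t-1}]=M_{n,t-1}$; nonnegativity is immediate and $\mathbb{E}[M_{n,0}]=1$.

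Applying Ville's inequality to the nonnegative martingale $\{M_{n,t}\}$ then gives $\Pr(\sup_{t\geq1}M_{n,t}\geq1/\delta)\leq\delta\,\mathbb{E}[M_{n,0}]=\delta$, which by the equivalence above is exactly $\Pr(\boldsymbol{\theta}_n^*\in\mathcal{C}_{n,t}\text{ for all }t\geq1)\geq1-\delta$ conditionally on $\boldsymbol{\theta}_n^*$; since this holds for every realization of $\boldsymbol{\theta}_n^*$, it also holds after averaging over the isotropic Gaussian prior of Assumption \ref{assumption: bayesian linear regression}, establishing \eqref{eq: Bayesian theta coverage}. I expect the main obstacle to be the careful bookkeeping around adaptivity: one has to verify that the sequentially defined Bayesian marginal $p(\mathcal{D}_{n,t})$ still equals $\int p(\mathcal{D}_{n,t}\mid\boldsymbol{\theta})p(\boldsymbol{\theta})\,d\boldsymbol{\theta}$ (so that Bayes' rule and the telescoping remain valid even though $\mathbf{x}_t$ depends on the past), that $M_{n,t}$ is $\mathcal{F}_t$-measurable once $\boldsymbol{\theta}_n^*$ is fixed, and that the interchange of expectation and integral in the martingale step is legitimate — here trivial because the relevant Gaussian densities are strictly positive and the posterior predictive integrates to one. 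The linear-Gaussian structure of Assumption \ref{assumption: bayesian linear regression} itself enters only to supply these regularity conditions and the closed form appearing in \eqref{eq: Bayesian confidence sequence}; the martingale argument is otherwise model-agnostic.
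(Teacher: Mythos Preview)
Your proof is correct and follows essentially the same route as the paper: identify the prior--posterior ratio at $\boldsymbol{\theta}_n^*$ as a nonnegative (super)martingale under the ground-truth law and apply Ville's inequality. Your rewriting via Bayes' rule, $\ell_{n,t}(\boldsymbol{\theta}_n^*)=p(\mathcal{D}_{n,t})/p(\mathcal{D}_{n,t}\mid\boldsymbol{\theta}_n^*)$, makes the martingale increment and the adaptivity bookkeeping cleaner than the paper's product-form manipulation, but the underlying argument is the same.
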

\begin{proof}
    By taking the expectation of the prior-posterior ratio $\ell_{n,t}(\boldsymbol{\theta}^*_n)$ conditioned on the collected dataset $\mathcal{D}_{n,t-1}$ and applying Fubini's theorem, we have
    \begin{align}
        &\mathds{E}_{ p(y|\mathbf{x}_{t},\boldsymbol{\theta}^*_n)}[\ell_{n,t}(\boldsymbol{\theta}^*_n)|\mathcal{D}_{n,t-1}]\nonumber\\&=\mathds{E}_{ p(y|\mathbf{x}_{t},\boldsymbol{\theta}^*_n)}\bigg[\prod_{t'=1}^t\frac{\mathds{E}_{ p(\boldsymbol{\theta}_n|\mathcal{D}_{n,t})}p(y_{n,t'}|\mathbf{x}_{t'},\boldsymbol{\theta}_n)}{p(y_{n,t'}|\mathbf{x}_{t'},\boldsymbol{\theta}_n^*)}\bigg|\mathcal{D}_{n,t-1}\bigg]\\&=\mathds{E}_{ p(\boldsymbol{\theta}_n|\mathcal{D}_{n,t})}\Bigg[\ell_{n,t-1}(\boldsymbol{\theta}_n^*)\mathds{E}_{ p(y|\mathbf{x}_{t},\boldsymbol{\theta}^*_n)}\bigg[\frac{p(y_{n,t}|\mathbf{x}_t,\boldsymbol{\theta}_n)}{p(y_{n,t}|\mathbf{x}_t,\boldsymbol{\theta}_n^*)}\nonumber\\&\hspace{2.4cm}\bigg|\mathcal{D}_{n,t-1}\bigg]\Bigg]\\&\leq \ell_{n,t-1}(\boldsymbol{\theta}^*_n).\label{eq: bayesian supermartingale}
    \end{align}
    The inequality \eqref{eq: bayesian supermartingale} shows that the prior-posterior ratio sequence $\{\ell_{n,t}(\boldsymbol{\theta}_n^*)\}_{t\geq1}$ is a test \emph{supermartingale} for the ground truth distribution $p(\mathcal{D}_{n,t}|\boldsymbol{\theta}^*_n)$. By applying the Ville's inequality \cite{ville1939etude}, we have
    \begin{align}
        \Pr(\boldsymbol{\theta}_n^*\in\mathcal{C}_{n,t},\,\,\text{for all }t\geq1)&=1-\Pr(\exists\boldsymbol{\theta}_n^*,\ell_{n,t}(\boldsymbol{\theta}_n^*)>1/\delta)\\&\geq 1-\delta\mathds{E}[\ell_{n,t}(\boldsymbol{\theta}_n^*)]\\&=1-\delta,
    \end{align}
    which completes the proof.
\end{proof}

Using the confidence set $\mathcal{C}_{n,t}$ in \eqref{eq: Bayesian confidence sequence}, PPR-UCB then evaluates the confidence sets for the utility functions $u_n(\mathbf{x})$ and regret $f_n(\mathbf{x})$ for all BSs. 

\addtolength{\topmargin}{0.01in}
\begin{algorithm}[t]
\caption{PPR-UCB}\label{table: ppr-ucb}
\small                 % or \footnotesize if needed
\setlength{\algomargin}{0pt}   % remove left padding
\SetInd{0.3em}{0.3em}          % reduce block indents
\SetNlSkip{0.25em}             % tighten gap between line number and code
\DontPrintSemicolon            % saves a bit of horizontal space
\sloppy                        % avoid overfull boxes (prevents spill into gutter)

\SetKwInOut{Input}{Input}
\Input{Total number of iterations $T$, lengthscale $l$, parameter $\delta$}
\SetKwInOut{Output}{Output}
\Output{Optimized solution $\mathbf{x}^*$}\
Initialize observation dataset $\mathcal{D}_{n,0}=\emptyset$ for all BSs, iteration $t=0$\\
\While{\emph{$t\leq T$}}{
Evaluate the prior-posterior ratio $\ell_{n,t}(\boldsymbol{\theta}_n)$ as in \eqref{eq: pp ratio} for all BSs\\
Build the confidence set $\mathcal{C}_{n,t}$ as in \eqref{eq: Bayesian confidence sequence}\\
Obtain the utility confidence intervals $\{\mathcal{U}_{n,t}(\mathbf{x})\}_{n\in\mathcal{N}}$ as in \eqref{eq: Bayesian utility confidence}\\
Obtain the regret confidence intervals $\{\mathcal{F}_{n,t}(\mathbf{x})\}_{n\in\mathcal{N}}$ as in \eqref{eq: Bayesian regret ci} \\
Report the action profile $\tilde{\mathbf{x}}_{t+1}$ via \eqref{eq: reported x}\\
Select the exploring action profile $\mathbf{x}^{(n_{t+1})}$ via \eqref{eq: exploring strategy profile}\\
Produce the final decision $\mathbf{x}_{t+1}$ via \eqref{eq: x t+1}\\
Evaluate  $\mathbf{x}_{t+1}$ and update datasets as $\mathcal{D}_{n,t+1}=\mathcal{D}_{n,t}\cup(\mathbf{x}_{t+1},y_{n,t+1})$ for all BSs $n\in\mathcal{N}$\\
Update the GP posterior using $\mathcal{D}_{n,t+1}$ as in \eqref{eq: utility gp posterior mean} and \eqref{eq: utility gp posterior variance} for all BSs $n\in\mathcal{N}$\\
Set iteration index $t=t+1$
}
Return $\mathbf{x}^{*}=\mathbf{x}_{T}$\\
\end{algorithm}

\subsection{Acquisition Process}\label{ssec: acq process}
For the action profile acquisition process, PPR-UCB starts by evaluating a confidence interval for the utility function $u_n(\mathbf{x})$ of each BS $n$ at time $t$, which is obtained as
\begin{align}
    \mathcal{U}_{n,t}(\mathbf{x})&=[\check{u}_{n,t}(\mathbf{x}),\hat{u}_{n,t}(\mathbf{x})]\nonumber\\&=[\min\limits_{\boldsymbol{\theta}_n\in\mathcal{C}_{n,t}}\psi_n(\mathbf{x})^{\sf T}\boldsymbol{\theta}_n,\max\limits_{\boldsymbol{\theta}_n\in\mathcal{C}_{n,t}}\psi_n(\mathbf{x})^{\sf T}\boldsymbol{\theta}_n].\label{eq: Bayesian utility confidence}
\end{align}
Notably, unlike conventional confidence intervals-guided BO schemes \cite{srinivas2012information,chowdhury2017kernelized}, PPR-UCB does not require perfect knowledge on the utility function space, e.g., reproducing kernel Hilbert space (RKHS) norm. Using the confidence interval $\mathcal{U}_{n,t}(\mathbf{x})$ in \eqref{eq: Bayesian utility confidence} for the utility function, PPR-UCB also evaluates the confidence set for the regret $f_n(\mathbf{x})$ in \eqref{eq: player n regret} of each BS $n$ as
\begin{align}
    \mathcal{F}_{n,t}(\mathbf{x})&=[\check{f}_{n,t}(\mathbf{x}),\hat{f}_{n,t}(\mathbf{x})]\nonumber\\&=\bigg[\max\limits_{\mathbf{x}_n'}\check{u}_{n,t}(\mathbf{x}_n',\mathbf{x}_{-n})-\hat{u}_{n,t}(\mathbf{x}),\nonumber\\&\hspace{0.7cm} \max\limits_{\mathbf{x}_n'}\hat{u}_{n,t}(\mathbf{x}_n',\mathbf{x}_{-n})-\check{u}_{n,t}(\mathbf{x})\bigg].\label{eq: Bayesian regret ci}
\end{align}

Accordingly, the optimizer reports an action profile $\tilde{\mathbf{x}}_{t+1}$ for possible execution at next time $t+1$ via
\begin{align}
    \tilde{\mathbf{x}}_{t+1}=\arg\min\limits_{\mathbf{x}}\max\limits_{n\in\mathcal{N}}\check{f}_{n,t}(\mathbf{x}).\label{eq: reported x}
\end{align}
The intuition behind the reported action profile \eqref{eq: reported x} is to minimize the optimistic estimate $\check{f}_{n,t}(\mathbf{x})$ of the regret \eqref{eq: player n regret} obtained by any BS $n\in\mathcal{N}$ that has the strongest incentive to deviate. To allow for exploration, the central optimizer also searches for the BS $n_{t+1}$ with the maximum regret upper bound under the reported action profile $\tilde{\mathbf{x}}_{t+1}$ in \eqref{eq: reported x}, i.e.,
\begin{align}
    n_{t+1}=\arg\max\limits_{n\in\mathcal{N}}\hat{f}_{n,t}(\tilde{\mathbf{x}}_{t+1}),\label{eq: worst player}
\end{align}
and obtains an exploring action profile as
\begin{align}
    \mathbf{x}^{(n_{t+1})}=\Big[\tilde{\mathbf{x}}_{-n_{t+1},t+1}, \arg\max\limits_{\mathbf{x}'_{n_{t+1}}}\hat{u}_{n_{t+1},t}(\mathbf{x}'_{n_{t+1}},\tilde{\mathbf{x}}_{-n_{t+1},t+1})\Big]\label{eq: exploring strategy profile}
\end{align}
The exploring action profile $\mathbf{x}^{(n_{t+1})}$ attempts to improve the utility gain for the BS $n_{t+1}$ that the central optimizer believes to have the strongest incentive to deviate from $\tilde{\mathbf{x}}_{t+1}$.

Then, the final decision for the next action profile at time $t+1$ is selected between the reported action profile $\tilde{\mathbf{x}}_{t+1}$ in \eqref{eq: reported x} and the exploring action profile $\mathbf{x}^{(n_{t+1})}$ in \eqref{eq: exploring strategy profile} that brings higher uncertainty, i.e.,
\begin{align}
    \mathbf{x}_{t+1}=\arg\max\limits_{\mathbf{x}\in\{\tilde{\mathbf{x}}_{t+1},\mathbf{x}^{(n_{t+1})}\}}\sigma^2_{t}(\mathbf{x}).\label{eq: x t+1}
\end{align}
The rationale of the choice \eqref{eq: x t+1} refers to a double application of the so-called \emph{optimism in the face of uncertainty} (OFU) principle in \cite{lai1985asymptotically}, that is, to encourage maximally reducing uncertainty of potentially promising actions provided by \eqref{eq: reported x} and \eqref{eq: exploring strategy profile}. 

The overall procedure of PPR-UCB is summarized in Algorithm \ref{table: ppr-ucb}.

\begin{figure}[t]

%\begin{minipage}[b]{1.0\linewidth}
  \centering
  \centerline{\includegraphics[scale=0.34]{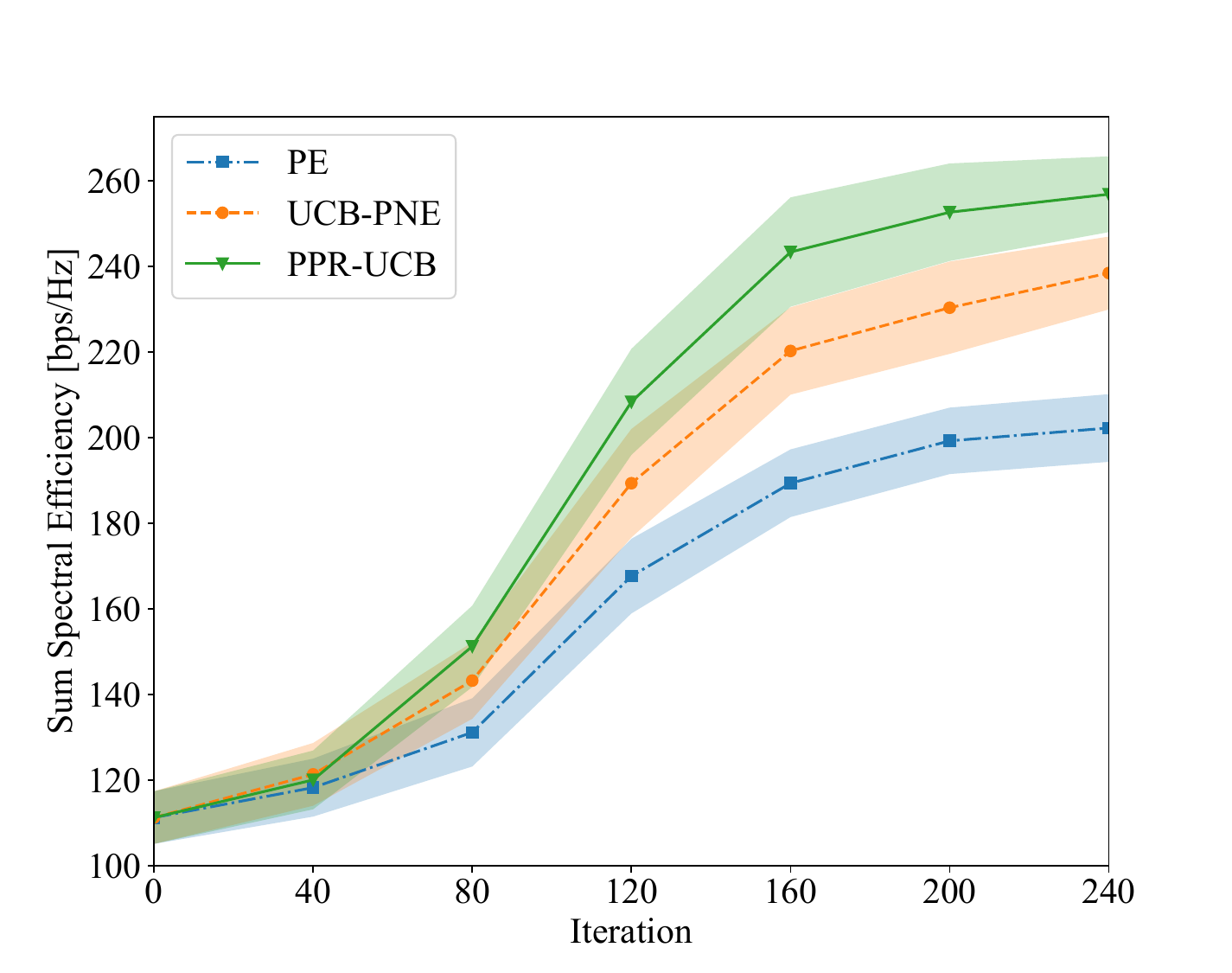}}
  \vspace{-0.2cm}
  \caption{Sum spectral efficiency against the number of optimization iterations $T$ for PE (blue dash-dotted line), UCB-PNE (orange dashed line), and PPR-UCB with parameter $\delta=0.05$ (green solid line).}
  \label{fig: sse vs iter}

\end{figure}

\section{Numerical Results}\label{sec: exp}
In this section, we empirically evaluate the performance of the proposed PPR-UCB on the non-cooperative game-based downlink power control problem introduced in Sec. \ref{sec: pf}. Throughout this section, the cellular communication network has $M=10$ UEs with $N_R=4$ receiving antennas in each BS with $N_T=16$ transmit antennas. The propagation pathloss $\text{PL}(d_{n,n'(m)})$ and channel fading models in \eqref{eq: channel matrix} are simulated following the urban microcellular (UMi) street canyon scenario as specified in 3GPP TR 38.901 \cite{3gpp38901}. 

For the cellular network topology configurations, we consider the service coverage of each BS to be a  planar grid with radius being $200$ meters, and UEs positions are uniformly generated in the interval $[20,200]$ meters. The maximum transmit power of each BS is constrained by $\sum_{m=1}^Mx_{n,m}\leq p_{\text{max}}=6.5$ Watt ($38.13$ dBm); the channel noise power is set to $\sigma_h^2=-86.46$ dBm; and the discount factor in \eqref{eq: utility} is set to $\lambda=0.1$.

\begin{figure}[t]

%\begin{minipage}[b]{1.0\linewidth}
  \centering
  \centerline{\includegraphics[scale=0.34]{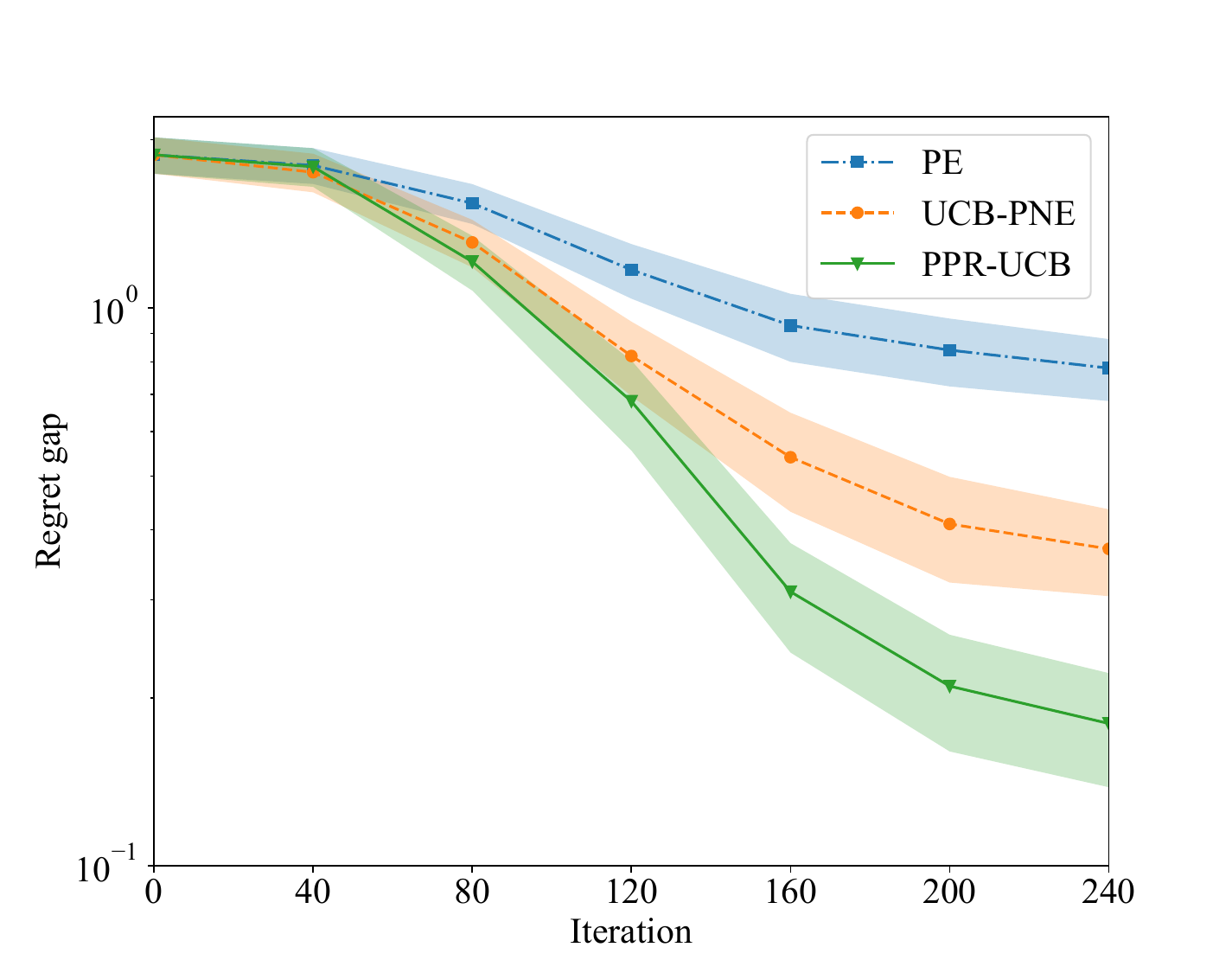}}
  \vspace{-0.2cm}
  \caption{Regret gap against the number of optimization iterations $T$ for PE (blue dash-dotted line), UCB-PNE (orange dashed line), and PPR-UCB with parameter $\delta=0.05$ (green solid line).}
  \label{fig: regret vs iter}

\end{figure}

We consider the following BO schemes as the benchmarks in all experiments are: 1) \emph{Probability of equilibrium} (PE) \cite{picheny2019bayesian}, which selects action profiles $\mathbf{x}_t$ with the goal of maximizing the probability of obtaining a PNE; and UCB-PNE in \cite{tay2023no}, which relies on Gaussian bandits optimization techniques \cite{chowdhury2017kernelized} to explore the $\epsilon$-PNE solutions. The GPs used in all schemes are configured with lengthscale $l=0.85$ and observation noise variance $\sigma^2=0.67$. The dissatisfaction tolerance level $\epsilon$ in \eqref{eq: epsilon pne} is selected as the best achievable level $\epsilon^*$ obtained by grid search, which is expressed as
\begin{align}
    \epsilon^*=\inf\{\epsilon\in\mathds{R}|\mathcal{X}^{(\epsilon)}\neq \emptyset\}.\label{eq: epsilon definition}
\end{align}
All results are averaged over 100 random realizations of observation noise signals and channel matrices in \eqref{eq: channel matrix}, with error bars to encompass 90\% confidence level. 

In Fig. \ref{fig: sse vs iter}, we set the parameter $\delta=0.05$ for PPR-UCB and plot the sum of discounted spectral efficiency across $N=7$ BSs as a function of the number of optimization iterations $T$. It is observed that PE obtains the worst performance, since the probability of equilibrium is zero for every action profile when the best achievable tolerance level $\epsilon^*>0$. Furthermore, our proposed PPR-UCB outperforms UCB-PNE after $40$ optimization rounds, and its improvement over all benchmarks is evident as the number of iterations increases, demonstrating the superior data efficiency of PPR-UCB in approximating the $\epsilon^*$-PNE solutions.

\begin{figure}[t]

%\begin{minipage}[b]{1.0\linewidth}
  \centering
  \centerline{\includegraphics[scale=0.34]{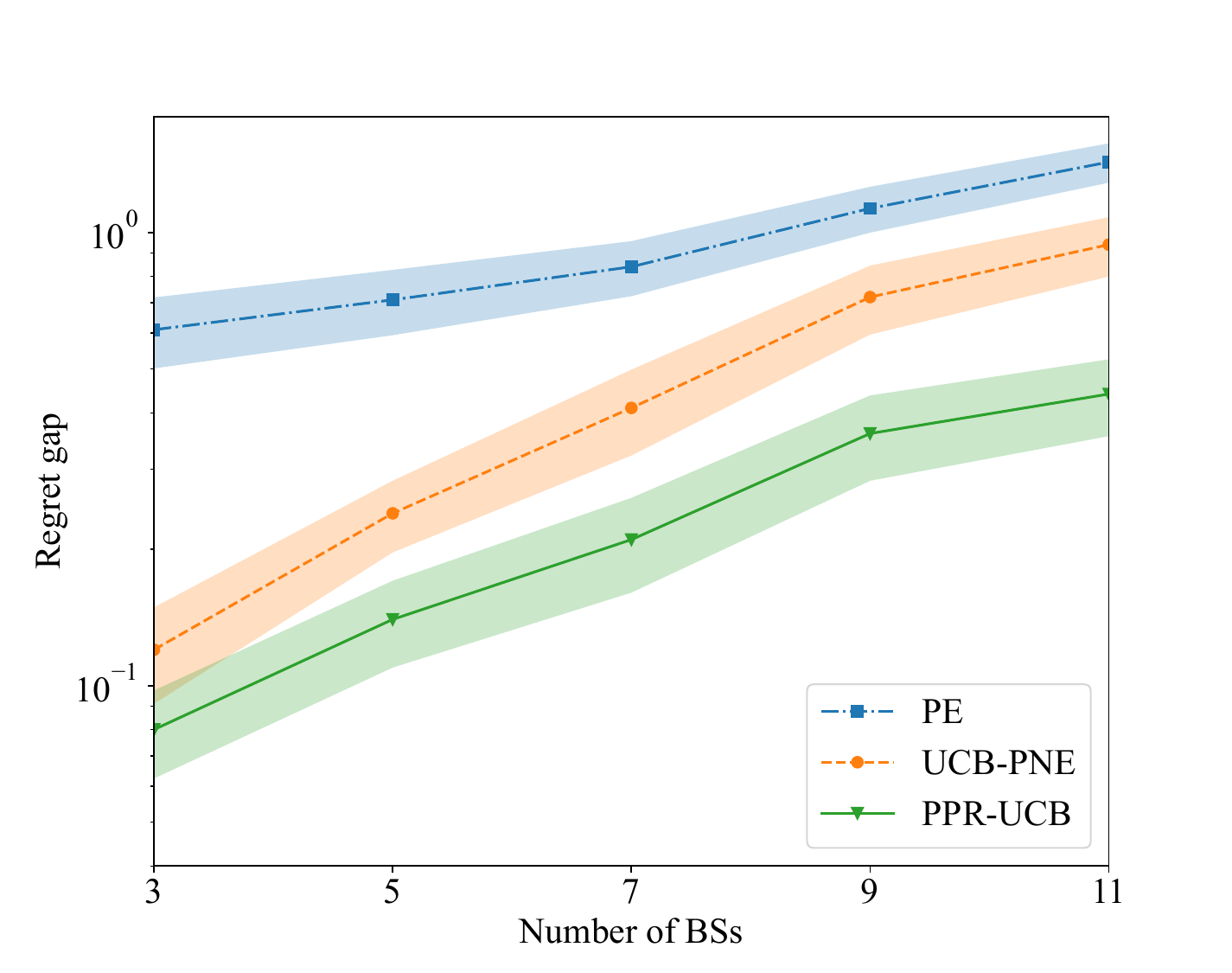}}
  \vspace{-0.2cm}
  \caption{Regret gap against the number of BSs $N$ for PE (blue dash-dotted line), UCB-PNE (orange dashed line), and PPR-UCB with parameter $\delta=0.05$ (green solid line).}
  \label{fig: regret vs n}
 
\end{figure}

In Fig. \ref{fig: regret vs iter}, we evaluate the \emph{regret gap} between $\epsilon^*$ with the best achievable $\epsilon$ obtained by each scheme, i.e., $\epsilon-\epsilon^*$, and plot the regret gap as a function of the number of optimization iterations $T$. Confirming the discussions in Fig. \ref{fig: sse vs iter}, the proposed PPR-UCB converges to its best achievable dissatisfaction level $\epsilon$ closer to the global optimum $\epsilon^*$ than all other benchmarks when the number of iterations $T>40$. These results demonstrate that the provident confidence sets evaluation approach adopted in PPR-UCB accurately captures the residual uncertainty of the utilities and regret for each BS, thus significantly improves the stability of the downlink transmission power allocation process.

Finally, we study the impact of the number of BSs $N$ on the regret gap in Fig. \ref{fig: regret vs n}. We set the total number of iterations to $T=200$ for all schemes. The regret gap of PPR-UCB is continuously better than other benchmarks on the increasing size of the cellular network, showing that the proposed PPR-UCB is scalable to strategic games with high dimensional action profiles. One way to further improve the performance of PPR-UCB on higher dimensional action spaces is to increase the size $D$ of the RFFs vector in \eqref{eq: Bayesian rff expression}, as this metric adjusts the accuracy of the kernel approximation in \eqref{eq: rff decomposition}.

\section{Conclusion}\label{sec: conclusion}
In this paper, we introduced a centralized BO framework for identifying approximate PNE solutions in a non-cooperative downlink power control game with costly-to-evaluate utility functions. By leveraging martingale techniques to construct confidence sets for the black-box utilities at each BS, the proposed PPR-UCB efficiently converges toward network configurations that approximate PNE solutions. Future work may address theoretical performance guarantees in the forms of game-theoretic regret bounds. Furthermore, it would be interesting to investigate the applications of BO to beam scheduling games \cite{xiong2024fair} or multi-fidelity optimizations in game settings \cite{zhang2025multi}.

\bibliographystyle{ieeetr}
\bibliography{refer}
\end{document}